\crefname{hypothesis}{Hypothesis}{Hypotheses}
\newcommand{\halffunc}[1]{ 
	\left\{ \begin{array}{ll} #1 \end{array} \right. 
}
\newcommand{\Z}{{\mathbb{Z}}}
\newcommand{\V}{{\mathcal{V}}}
\newcommand{\E}{{\mathcal{E}}}
\newcommand{\sub}{\subseteq}		% Subset
\def\setsys{\mathfrak} % Set systems
\newcommand{\rem}[1]{}
\title{Formal Concept Lattice Representations and Algorithms for Hypergraphs}
\author{
Michael G. Rawson\footnote{Pacific Northwest National Laboratory; \{first\}.\{last\}@pnnl.gov} 
\and Audun Myers\footnotemark[1] 
\and Robert Green\footnote{University at Albany, SUNY; rgreen@albany.edu} \footnotemark[1] 
\and Michael Robinson\footnote{American University; michaelr@american.edu} 
\and Cliff Joslyn\footnotemark[1] \footnote{Binghamton University}
}
\begin{document}

\maketitle

\begin{abstract}

There is increasing focus on  analyzing data represented as hypergraphs, which are better able to express complex relationships amongst entities than are graphs. 
Much of the critical information about hypergraph structure is available only in the intersection relationships of the hyperedges, and so forming the ``intersection complex'' of a hypergraph is quite valuable. This identifies a  valuable isomorphism between the intersection complex and the ``concept lattice'' formed from taking  the hypergraph's incidence matrix  as a  ``formal context'': hypergraphs also generalize graphs in that their incidence matrices are arbitrary Boolean matrices. 
This isomorphism allows connecting discrete algorithms for lattices and hypergraphs, in particular  $s$-walks or $s$-paths on hypergraphs can be mapped to order theoretical operations on the concept  lattice. We give new algorithms for formal concept lattices and hypergraph $s$-walks on concept lattices. We apply this to a large real-world dataset and find deep lattices implying high interconnectivity and complex geometry of hyperedges. 

\end{abstract}

% REQUIRED
\begin{keywords}
Formal Concept Analysis, Concept Lattice, Incidence Matrix, Hypergraph, $s$-Walk, Intersection Complex
\end{keywords}

% REQUIRED
\begin{MSCcodes}
06B99, 68P01, 05C65
\end{MSCcodes}

\section{Introduction}
\label{sec:intro}

Binary relations are a  foundation of data science, and frequently come in the form of  large sparse Boolean matrices \cite{Ambrose_2020, Purvine2018, Robinson_langsec2022}. One  common interpretation of a Boolean matrix is as an incidence matrix of a hypergraph, wherein arbitrary sized groups of entities (rows) are joined together into various hyperedges (columns). The incident edges in a graph identify nodes they share in common. Moreover, intersecting hyperedges in a hypergraph identifies arbitrary numbers of nodes, and it is this intersection structure which carries all of the information coded in a hypergraph.
 
Section \ref{sec:back} gives mathematical preliminaries. In Section \ref{sec:theory}, we give correspondences between the concept lattice of a binary relation and the intersection complex of its hypergraph in Theorem \ref{thm:1}. In Section \ref{sec:algo}, we give new algorithms that use the correspondence to calculate the associated lattice (Algorithm \ref{alg:lattice_from_hypergraph}), the shortest $s$-path (Algorithm \ref{alg:shortest_s_path}), and the $s$-connected components (Algorithm \ref{alg:s_connected_components}). In Sections \ref{sec:example} and \ref{sec:case}, we demonstrate the theory and algorithms on example data and visualize the objects and their connections. Finally, in Section \ref{sec:conclusion} we discuss and conclude. 

\section{Background}
\label{sec:back}

A binary relation, $R$, over sets $X$ and $Y$ consists of ordered pairs $(x,y) \in R \sub X \times Y$ written $xRy$ for every $(x,y) \in R$  and $\neg (xRy)$ otherwise.
A relation can be described by a Boolean characteristic matrix $\chi \in \Z_2^{|X| \times |Y|}$ with 
\[ \chi[\kappa, \lambda] = \halffunc{
1,	& \text{if } x_\kappa R y_\lambda	\\
0,	& \text{if } \neg (x_\kappa R y_\lambda) \\
}
\]
for $\kappa$ and $\lambda$ in index sets $K$ and $\Lambda$ and for $x_\kappa \in X$ and $y_\lambda \in Y$.

A hypergraph is a pair $H=(V,E)$ where the set $V$ are vertices and $E$ is an indexed multiset of hyperedges $e \in E$ with $e \sub V$. Let set $\V$ index the vertices $V$ and $\E$ index the hyperedges $E$. 
We can then encode $H$ as a Boolean characteristic matrix $\chi \in \Z_2^{|\V| \times |\E|}$ with 
\[ \chi[\kappa, \lambda] = \halffunc{
1,	& \text{if } v_\kappa \in e_\lambda	\\
0,	& \text{if } v_\kappa \notin e_\lambda \\
}
\]
for $\kappa$ and $\lambda$ in index sets $\V$ and $\E$ and for $v_\kappa \in V$ and $e_\lambda \in E$. With this encoding, we express the hypergraph as the triple $H=(\V,\E,\chi)$. 

A formal context $C$ is defined by a triple $(G,M,\Gamma)$ where $G$ are  objects, $M$ are their attributes, and $\Gamma$ is a binary relation on $G \times M$ so that $g \Gamma m$ when object $g$ has attribute $m$ \cite{davey_priestley_2002}. 
We then call the pair $(A,B) \sub (G,M)$ a concept if 
\begin{enumerate}
    \item $A$ contains all of the objects that share the attributes in $B$ and 
    \item $B$ contains all of the attributes shared by the objects in $A$.
\end{enumerate}
The concept $(A,B)$ has extent $A$ and intent $B$. We define the tick operator that maps one to the other. For $A \sub G$ and $B \sub M$, define  
\begin{align*}
A' &= \{m \in M :  g \Gamma m \text{ for all } g \in A\},	\text{ and } \\
B' &= \{g \in G :  g \Gamma m \text{ for all } m \in B\}.
\end{align*}
A pair $(A,B)$ is a concept if and only if $A' = B$ and $B' = A$. Furthermore, $A=(A')'=A''$ and $B = (B')' = B''$. We call the set of all concepts $\setsys{B}(G,M,\Gamma)$.

\begin{definition}
  Let $P$ be a set. A \textbf{partial order} on $P$ is a binary relation $\leq$ which is reflexive, symmetric, and anti-transitive. 
  The pair $(P,\leq)$ is called a \textbf{partially ordered set} or a \textbf{poset}.

\end{definition}

\begin{definition}
  Given two partially ordered sets $(P, \leq_P)$ and $(Q,\le_Q)$, 
  a function $f: P \to Q$ is an \textbf{order embedding} if for all $x,y \in P$,
  it follows that $x \leq_P y$ if and only if $f(x) \le_Q f(y)$. 
    If $f$ is also surjective then it is an \textbf{order isomorphism}.
\end{definition}

\begin{definition}
A hypergraph $H = (\mathcal{V},\mathcal{E},\chi)=(V,E)$ is called topped if $V \in E$, and bottomed if $\emptyset \in E$. 
\end{definition}

\begin{definition}
  Given a hypergraph $H=(\mathcal{V},\mathcal{E},\chi)$, 
  its intersection complex $E^\cap \sub 2^V$ consists of all intersections of subsets of $E$:
  \[ E^\cap = \bigcup_{F \sub E, F \neq \emptyset} \left( \bigcap_{f\in F} f \right) .	\]
\end{definition}

The intersection complex $E^\cap$ is a partial order with $(E^\cap,\sub)$ ordering the intersections by subset, as is the %reduced 
edge set $(E,\sub)$ itself. Note that edge partial order $(E,\sub) \sub (E^\cap,\sub)$ is a sub-order of the intersection complex. An intersection complex is also called an intersection structure, intersection closure, or $\cap$-structure. 

Paths in graphs generalize to $s$-paths in hypergraphs. 

\begin{definition}  As in \cite{Aksoy2020, Hayashi}, for $s \in \Z_{> 0}$, an $s$-path in a hypergraph $H=(\mathcal{V},\mathcal{E},\chi)$ is a sequence of hyperedges $[e_0,e_1,...,e_k]$ where for all $0 \le i \le k-1, |e_i \cap e_{i+1}| \ge s$.
\end{definition}

\section{Theory} \label{sec:theory}

\begin{table}[h]
\begin{center}
% $\chi=$
\begin{tabular}{l | c | c | c | c | c | c | c }
 & Group 1 & Group 2 & Group 3 & Group 4 & Group 5 & Group 6 & Group 7 \\ 
\hline \hline
a & 0 & 1 & 1 & 1 & 0 & 0 & 0 \\ 
b & 1 & 1 & 0 & 1 & 0 & 0 & 0 \\
c & 1 & 1 & 0 & 0 & 0 & 0 & 0 \\
d & 0 & 1 & 1 & 0 & 0 & 0 & 0 \\
e & 1 & 0 & 0 & 0 & 1 & 0 & 0 \\
f & 0 & 0 & 0 & 0 & 1 & 1 & 0 \\
g & 0 & 0 & 0 & 0 & 1 & 1 & 1
\end{tabular}
\end{center}
\caption{Incidence matrix group memberships of $a,b,c,d,e,f,g$. } \label{tab:1}
\end{table}  

Consider a set of elements, $V=\{a,b,\ldots,g\}$, that are members of groups $E=\{1,2,\ldots,7\}$ as given by Table \ref{tab:1}. We take this first as an incidence matrix $\chi$ of a hypergraph $H =(\V,\E,\chi)$ with %reduced 
edge set 
    \[ E=\{\{b,c,e\},\{a,b,c,d\},\{a,d\},\{a,b\},\{e,f,g\},\{f,g\}, \{g\}\}  \]
as visualized in Figure \ref{fig:toy_example_HG}. The edge partial order is show in Figure \ref{fig:toy_example_HG_poset}.

\begin{figure}[htbp]
    \centering
    \begin{subfigure}{0.48\textwidth}
        \centering
        \includegraphics[width=.99\textwidth]{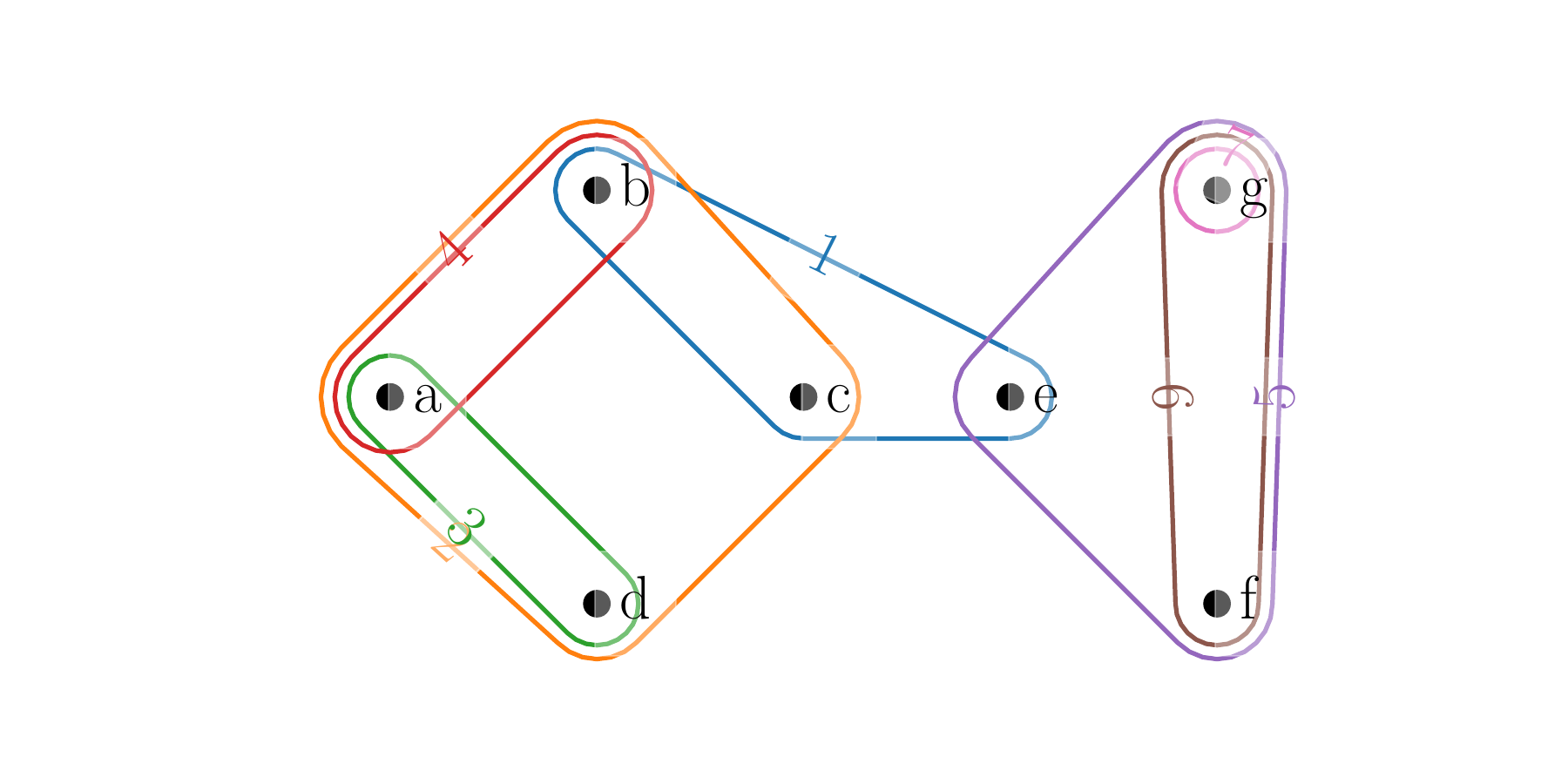}
        \caption{Hypergraph visualization.} 
        \label{fig:toy_example_HG}
    \end{subfigure}
    \hfill
    \begin{subfigure}{0.51\textwidth}
        \centering
        \includegraphics[width=0.99\textwidth]{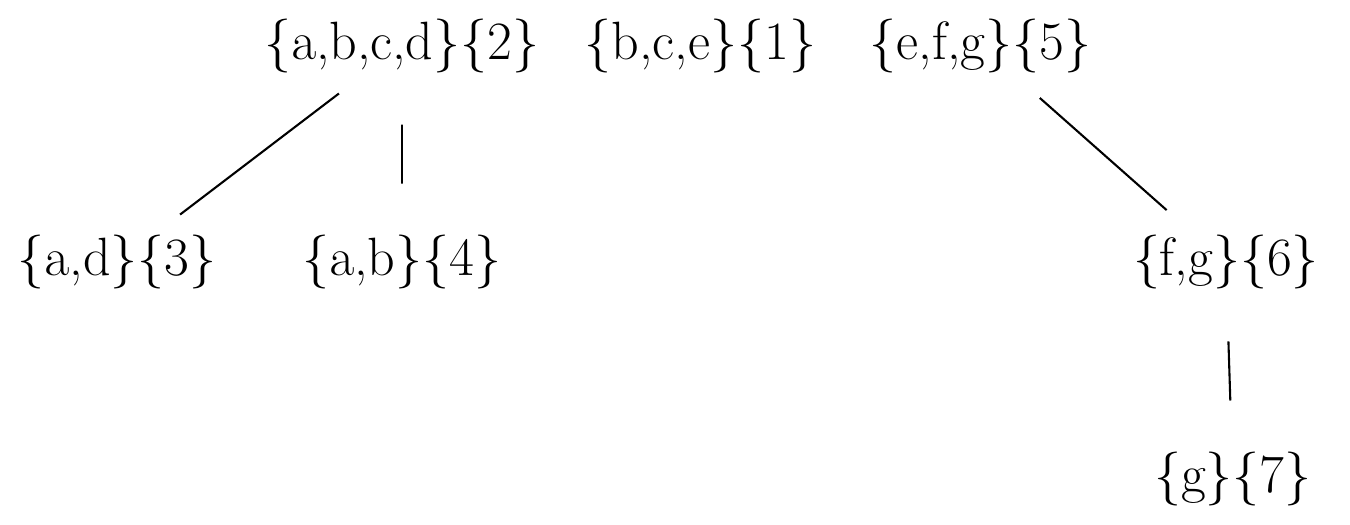}
        \caption{Edge partial order.}
        \label{fig:toy_example_HG_poset}
    \end{subfigure}
    \begin{subfigure}{0.7\textwidth}
        \centering
        \includegraphics[width=.7\textwidth]{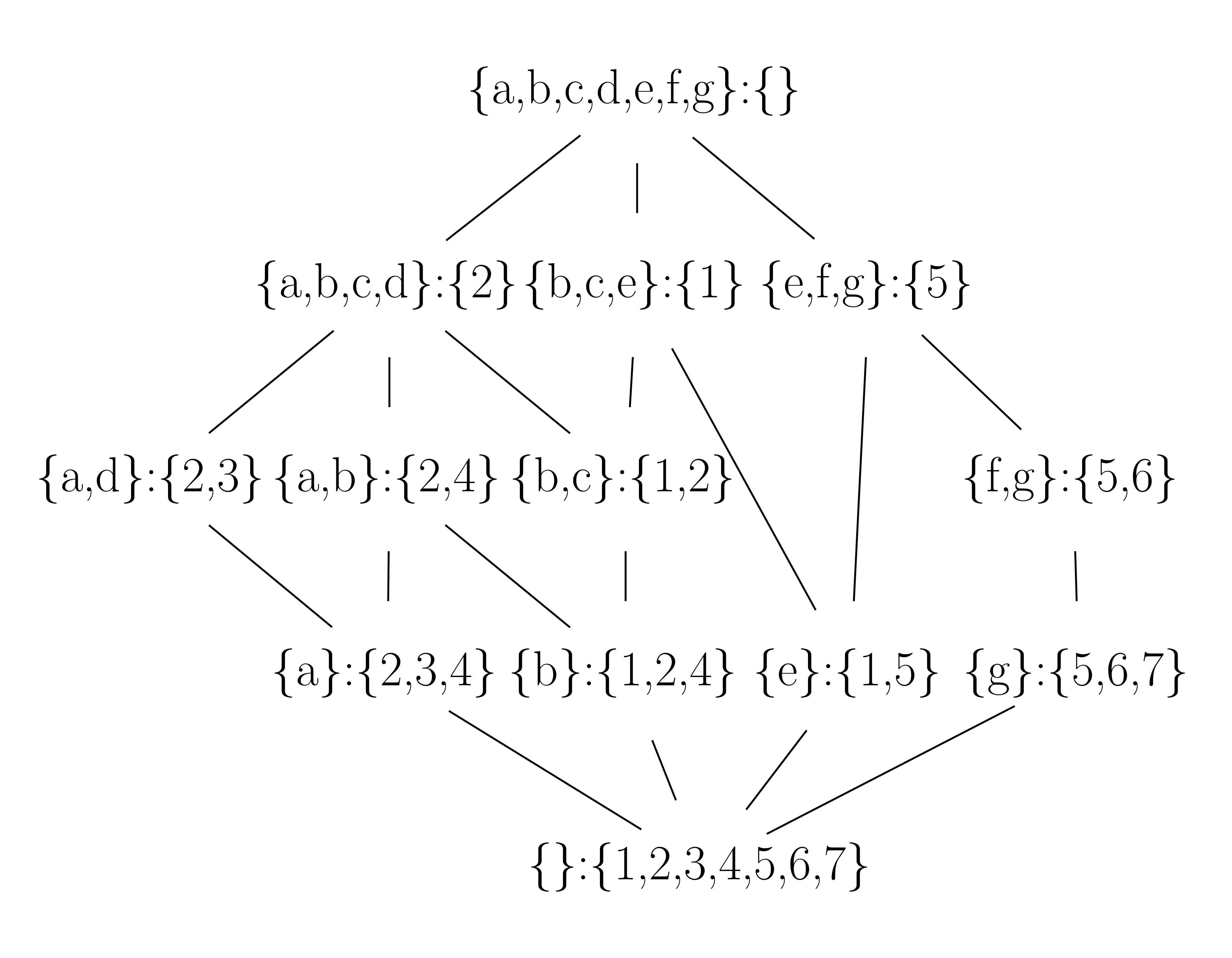}
        \caption{Concept lattice of incidence matrix.}
        \label{fig:toy_example_intersection_closure_lattice}
    \end{subfigure}
    \label{fig:toy_example}
    \caption{Hypergraph, edge partial order, and concept lattice of Table \ref{tab:1}. In Figure \ref{fig:toy_example_HG} vertices are dots and hyperedges are colored bands containing their vertices.}
\end{figure}

By taking $R$ as a formal context of $H$, for each $\kappa \in \V$ and $\lambda \in \E$, we set $\kappa R \lambda$ when $\chi[\kappa,\lambda]=1$. This yields formal concepts as shown in the concept lattice in Figure \ref{fig:toy_example_intersection_closure_lattice}, denoted as 

\begin{align*}
\mathfrak{B}(&X,Y,R) :=\{
(\{\},\{1,2,3,4,5,6,7\}),
(\{a\},\{2,3,4\}),
(\{b\},\{1,2,4\}),\\
&(\{g\},\{5,6,7\}),
(\{e\},\{1,5\}),
(\{a,b\},\{2,4\}),
(\{a,d\},\{2,3\}),
(\{b,c\},\{1,2\}),\\
&(\{f,g\},\{5,6\}), 
(\{b,c,e\},\{1\}),
(\{e,f,g\},\{5\}),
(\{a,b,c,d\},\{2\}),
(\{a,b,c,d,e,f,g\},\{\})
\}.
\end{align*}

The concept lattice gives a partial ordering to all of the elements of $\mathfrak{B}(X,Y,R)$. For example, $(\{a\},\{2,3,4\}) \le (\{a,d\},\{2,3\})$ since $\{a\} \subset \{a,d\}$ and $\{2,3,4\} \supset \{2,3\}$. 
We label the elements based on their inclusions using Galois notation \cite{davey_priestley_2002}. 
For example, hyperedge 7 is labeled as $\ell = \{g\}:\{5,6,7\}$ since the node $g$ is contained in the hyperedges 5, 6, and 7. 
Additionally, we can extract that this node is representative of hyperedge 7 by taking the immediate upstream nodes of $\{g\}:\{5,6,7\}$ which is $\{f,g\}:\{5,6\}$ and $\{e,f,g\}:\{5\}$ then taking the symmetric difference between the union hyperedge sets in the upstream labels and the hyperedge set of the node (i.e., $(\{5,6\} \cup \{5\}) \Delta \{5,6,7\} = \{7\}$). 
Surprisingly, the concept lattice maps to the intersection closure of the topped hypergraph obtained from $\chi$. 

\begin{theorem} \label{thm:1}
  For context $C=(X,Y,R)$, the intersection closure of the topped hypergraph obtained from $C$ is lattice isomorphic to $\mathfrak{B}(X,Y,R)$. 
\end{theorem}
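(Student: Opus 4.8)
The plan is to factor the claimed isomorphism through the lattice of extents of $C$. Write $e_y = \{y\}'$ for the hyperedge on vertex set $X$ associated with each attribute $y\in Y$, so that the topped hypergraph obtained from $C$ has edge set $E_\top = \{e_y : y\in Y\}\cup\{X\}$, and let $\mathrm{Ext}(C)=\{A\sub X : A''=A\}$ denote its set of extents, ordered by inclusion. First I would recall the standard fact (part of the Basic Theorem of Formal Concept Analysis, and immediate from the identities for $'$ quoted above, cf.\ \cite{davey_priestley_2002}) that a concept is determined by its extent and that $(A_1,B_1)\le(A_2,B_2)$ exactly when $A_1\sub A_2$; hence $(A,B)\mapsto A$ is an order isomorphism $\mathfrak{B}(X,Y,R)\to(\mathrm{Ext}(C),\sub)$. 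It therefore suffices to prove the equality of ordered sets $(\mathrm{Ext}(C),\sub)=(E_\top^\cap,\sub)$, after which an order isomorphism between complete lattices upgrades automatically to a lattice isomorphism.

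Next I would show $E_\top^\cap\sub\mathrm{Ext}(C)$. The key lemma is that $\mathrm{Ext}(C)$ is closed under arbitrary intersections: since $A\mapsto A''$ is extensive and monotone, for any family $(A_i)$ of extents one has $(\bigcap_i A_i)''\sub A_j''=A_j$ for every $j$, hence $\bigcap_i A_i\sub(\bigcap_i A_i)''\sub\bigcap_i A_i$, so $\bigcap_i A_i$ is again an extent. Each generator of $E_\top$ is an extent: $X=\emptyset'$ is an extent, and each $e_y=\{y\}'$ is an extent because any set of the form $B'$ is closed (a consequence of $B\sub B''$ and the order-reversing nature of $'$). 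Since every member of $E_\top^\cap$ is a nonempty intersection of such generators, it is an extent.

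Then I would prove the reverse inclusion $\mathrm{Ext}(C)\sub E_\top^\cap$. Let $A=A''$ be an extent. Directly from the definition of $'$ one has $B'=\bigcap_{y\in B}\{y\}'$ for every $B\sub Y$, so $A=A''=(A')'=\bigcap_{y\in A'}e_y$. If $A'\neq\emptyset$ this exhibits $A$ as a nonempty intersection of hyperedges, so $A\in E_\top^\cap$; if $A'=\emptyset$ then $A=A''=\emptyset'=X\in E_\top\sub E_\top^\cap$. Combining the two inclusions gives $\mathrm{Ext}(C)=E_\top^\cap$ as subfamilies of $2^X$, both carrying the inclusion order, and composing with the isomorphism of the first step completes the proof; it is also worth remarking that $\mathfrak{B}(X,Y,R)$ is a complete lattice and that $(E_\top^\cap,\sub)$ is a complete lattice (an intersection-closed family with greatest element $X$), so that ``order isomorphic'' indeed yields ``lattice isomorphic''.

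I expect the only genuine subtlety to be the bookkeeping around the empty attribute set and the top extent together with the clause $F\neq\emptyset$ in the definition of $E^\cap$: this is precisely where the hypothesis ``topped'' is used, since $X$ must be available as a hyperedge because it need not be expressible as an intersection of the $e_y$ (a two-point, two-incomparable-attribute context already shows this). Everything else is formal, resting only on the closure-operator identities for $'$ and the fact that an order isomorphism transports meets and joins.
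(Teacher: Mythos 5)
Your proof is correct and follows essentially the same route as the paper: both identify a concept with its extent and use the closure identities (in your notation $A=A''=\bigcap_{y\in A'}\{y\}'$, in the paper's $A=B'$, $A'=B''$) to match concepts bijectively and order-compatibly with elements of the topped intersection closure, invoking toppedness exactly when $A'=\emptyset$. Your factorization through the extent lattice is just a cleaner packaging of the same facts --- it makes explicit why a concept's extent lies in $E_\top^\cap$ and checks both directions of the order embedding, points the paper's direct verification of $\phi$ treats tersely --- rather than a genuinely different argument.
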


\begin{proof}
    
Let hypergraph $H=(V,E)$ come from context $C$. So $V=X$ and $E \subset \mathcal P(V)$. Let $L$ be the intersection closure of the topped hypergraph. 

We will establish lattice isomorphism $\phi : \mathfrak{B}(X,Y,R) \rightarrow L$. Take a concept $c \in \mathfrak{B}(X,Y,R)$.
Let $\phi(c)=A$ where $c = (A,B)$ with $A \subset X$ and $B \subset Y$.
Write $\{v_1,...,v_k\}=A$.
and write $\{e_1,...,e_m\}=B$ as a subset of $E$.
Then $\{v_1,...,v_k\} \subset e_i$ for $1 \le i \le m$
and then $A \in L$. 
If $A=V$ then possibly $B=\emptyset$ however that case is covered since $L$ is topped and must contain $V$. Note $A$ may be $\emptyset$. 

So $\phi$ maps $\mathfrak{B}(X,Y,R)$ to $L$.
Clearly $\phi$ is injective.
Now, take an arbitrary $A \in L$ with $\{v_1,...,v_k\}=A$.
So $\{v_1,...,v_k\}=A = \cap_{e \in B} e $ for some $B \subset E$. Note $B$ may be $\emptyset$. 
Then we have $A = B'$
and $A' = B''$.
So $(B',B'') \in \mathfrak{B}(X,Y,R)$.
This implies $(B',B'') = (A,A')$.
Now let $c=(A,A')$.
Then $\phi^{-1}(A) = c \in \mathfrak{B}(X,Y,R)$
and $\phi$ is surjective.

We will show that for $c,d \in \mathfrak{B}(X,Y,R)$ and $c \le d$ implies $\phi(c) \le \phi(d)$.
Let $c = (A,B)$ and $d = (Q,R)$.
Then $A \subset Q$ and $B \supset R$.
So $\phi(c)=A$ and $\phi(d)=Q$
and $\phi(c) \subset \phi(d)$.
In the poset, $\phi(c) \le \phi(d)$.
A poset isomorphism between lattices is a lattice isomorphism \cite{davey_priestley_2002}.
$\blacksquare$

\end{proof}

The authors note similar theorems to the one above can be found in \cite{ayzenberg2019topology, Cattaneo2016}, but we find our formulation more concise.

\section{Algorithms} \label{sec:algo}
In this section we introduce three algorithms used for studying hypergraphs with lattice representations. The first is Algorithm~\ref{alg:lattice_from_hypergraph} for constructing the labeled intersection closure lattice from a hypergraph, the second is Algorithm~\ref{alg:shortest_s_path} to get the shortest $s$-path in the hypergraph using the lattice, and lastly is Algorithm~\ref{alg:s_connected_components} for calculating the $s$-connected components of a hypergraph using the lattice.
In Section~\ref{sec:example}, each of these algorithms is demonstrated on our example, which provides context for how these algorithms can be applied. 

\subsection{Computational Complexity}
Algorithm~\ref{alg:lattice_from_hypergraph} has been studied for a number of  years \cite{Butka, GaBWiR99, gonzalez2018modelling, KUZNETSOV,  lindig2000fast, wang2007algorithm}. The worst case runtime is $O(|V| \ 2^{2|V|})$ for the complete hypergraph $H=(V,E)$, see \cite{KUZNETSOV}. For a $k$-sparse hypergraph, $|E|=k$, the runtime can be reduced from $O(|V| \ 2^{2|V|})$ to $O(|V| \ 2^{2k})$. Where Algorithm~\ref{alg:lattice_from_hypergraph} is most useful is for $k$-lattice hypergraphs which is where the intersection lattice is limited to $k$ elements. In that case, the runtime is $O(k \ |V| |E|^2)$. 

\begin{algorithm}[h]
\caption{Calculate Intersection Lattice of Hypergraph}
\label{alg:lattice_from_hypergraph}
\begin{algorithmic}
\STATE{\textbf{Input:} Hypergraph $H=(V,E)$}
\STATE{\textbf{Output:} Lattice $L$}
\STATE{\textbf{Begin:}}
\STATE{$V_L = E$ \#Lattice elements/vertices}
\FOR{$k \in [1,min\{|V|,|E|\}]$}
    \FOR{$v_1,v_2 \in V_L$}
        \STATE{$V_L = V_L \cup (v_1 \cap v_2)$}
    \ENDFOR
\ENDFOR
\STATE{$V_L = V_L \cup V$ \#Add lattice top if missing}
\STATE{$E_L = \emptyset$ \#Lattice order relations}
\FOR{$v_1,v_2 \in V_L$}
    \IF{$v_1 \subset v_2$}
        \STATE{$E_L = E_L \cup (v_1,v_2)$}
    \ENDIF
\ENDFOR
\STATE \textbf{Return} $L=(V_L,E_L)$

\end{algorithmic}
\end{algorithm}

In practice, modern hardware has parallel abilities so we vectorize Algorithm \ref{alg:lattice_from_hypergraph} which we show in Algorithm \ref{alg:lattice_from_hypergraph_parallel}. This massively speeds up the runtime. The formal runtime for Algorithm \ref{alg:lattice_from_hypergraph_parallel} is not better but in practice the speed is effectively $O(k \ |E|)$ for small depth $k$ lattices. 

\begin{algorithm}[h]
\caption{Calculate Intersection Lattice of Hypergraph Vectorized}
\label{alg:lattice_from_hypergraph_parallel}
\begin{algorithmic}
\STATE{\textbf{Input:} Incidence Matrix $M : n\ \times\ p$}
\STATE{\textbf{Output:} Lattice $L$}
\STATE{\textbf{Begin:}}
\STATE{I = \{\} \#intersections set}
\STATE{N = M \#intersection incidence matrix}
\STATE{Remove duplicate columns in N}
\STATE{done = true}
\WHILE{not(done)}
    \STATE{dim\_N = dim(N)}
    \FOR{$i \in [1,dim(N,2)-1]$}

        \STATE{$S = M \wedge N[:,mod(i:p-1+i,dim(N,2))]$ \# intersection incidence}
        \STATE{intersections\_i = true\_indices(and(
                    $S$,\ dim=1))}
        \STATE{I = I $\cup$ 
                    \{(j,j-i) : j $\in$ intersections\_i \}}
        \STATE{S\_new\_indices = true\_indices(and(or(\\
            \quad broadcast(N, n $\times$ dim(N,2) $\times$ p) 
        $\veebar$ 
        broadcast(S,n $\times$ p $\times$ dim(N,2))$^T$, \\ 
            \quad dim=1),dim=2)) }
        \STATE{N = [N, S[:,intersections\_i $\cap$ S\_new\_indices]]}
    \ENDFOR
    \IF{dim\_N == dim(N)}
        \STATE{done = true}
    \ENDIF
\ENDWHILE
\STATE{C = \{\} \#containment set}
\FOR{$i \in [1,dim(N,2)-1]$}
    \STATE{containment\_i = true\_indices(not(and(
                $N \veebar (N \vee N[:,[i+1:dim(N,2), 1:i]])$,\ dim=1)))}
    \STATE{C = C $\cup$ 
                \{(j,j-i) : j $\in$ containment\_i \}}
\ENDFOR
\STATE{L\_element = N}
\STATE{L\_order = set(C)}
\STATE{L = (L\_element, L\_order)}
\end{algorithmic}
\end{algorithm}

\begin{algorithm}[h]
\caption{Shortest $s$-Path on Lattice}
\label{alg:shortest_s_path}
\begin{algorithmic}
\STATE{\textbf{Input:} \\
Integer $s$ \\
Labeled Lattice $L$ \\
source hyperedge $s_H$ \\ 
target hyperedge $t_H$}
\STATE{\textbf{Output:} \\
Lattice Path $P_L$ \\ 
Lattice Path Distance $d_L$ \\
Hypergraph Path $P_H$ \\ 
Hypergraph Path Distance $d_H$}
\STATE{\textbf{Begin:}}
\STATE{Define $L'$ as $L$ with top removed if top is not a hyperedge}
\STATE{Removes nodes in $L'$ with label $\ell_i = v_i : \varepsilon_i$ if $|v_i| < s$}
\STATE{Find $s_L$ as source node in lattice associated to $s_H$}
\STATE{Find $t_L$ as target node in lattice associated to $t_H$}
\STATE{Create $G_L$ as undirected graph version of $L'$}
\STATE{Apply depth-first search for shortest path in lattice between $s_L$ and $t_L$}
\STATE{Set $d_L = |P_L| - 1$}
\STATE{Initialize $d_H = 0$ and $P_H = []$}
\FOR{$n_i \in P_L$}
    \IF{$v_i$ of $n_i$ is a hyperedge}
        \STATE{Update $d_H++$ and $P_H$ by appending the hyperedge associated to $v_i$}
    \ENDIF
    \IF{$v_i^- \subset v_i \subset v_i^+$}
        \STATE{Update $d_H--$ and removing the hyperedge associated to $v_i$ from $P_H$}
    \ENDIF
\ENDFOR
\STATE \textbf{Return} $P_L$, $P_H$, $d_L$, $d_H$
\end{algorithmic}
\end{algorithm}

\begin{algorithm}[h]
\caption{$s$-Connected Components}
\label{alg:s_connected_components}
\begin{algorithmic}
\STATE{\textbf{Input:} $s$, Labeled Lattice $L$}
\STATE{\textbf{Output:} $s$-connected components in hypergraph}
\STATE{Define $L'$ as $L$ with top removed if top is not a hyperedge}
\STATE{Removes nodes in $L'$ with label $\ell_i = v_i : \varepsilon_i$ if $|v_i| < s$}
\STATE{Create $G_L$ as undirected graph version of $L'$}
\STATE{Get connected components $CC_L$ of $G_L$}
\STATE{Initialize $s$-connected components of hypergraph $CC_H$}
\FOR{$C_L$ in $CC_L$}
    \STATE{Initialize empty hypergraph component $C_H$}
    \FOR{$n_i$ in $C_L$}
        \IF{$v_i$ of $n_i$ is a hyperedge of $H$}
            \STATE{Update $C_H$ by including hyperedge $v_i$}
        \ENDIF
    \ENDFOR
    \STATE{Update $CC_H$ by including $C_H$}
\ENDFOR
\STATE \textbf{Return} {$CC_H$}
\end{algorithmic}
\end{algorithm}

\section{Example} \label{sec:example}

In this section, we use the example lattice in 
Figure~\ref{fig:toy_example_intersection_closure_lattice} to demonstrate the shortest $s$-path and the $s$-connected components in the lattice.

\begin{figure}[htbp]
    \centering
    \begin{subfigure}{0.55\textwidth}
        \centering
        \includegraphics[width=0.99\textwidth]{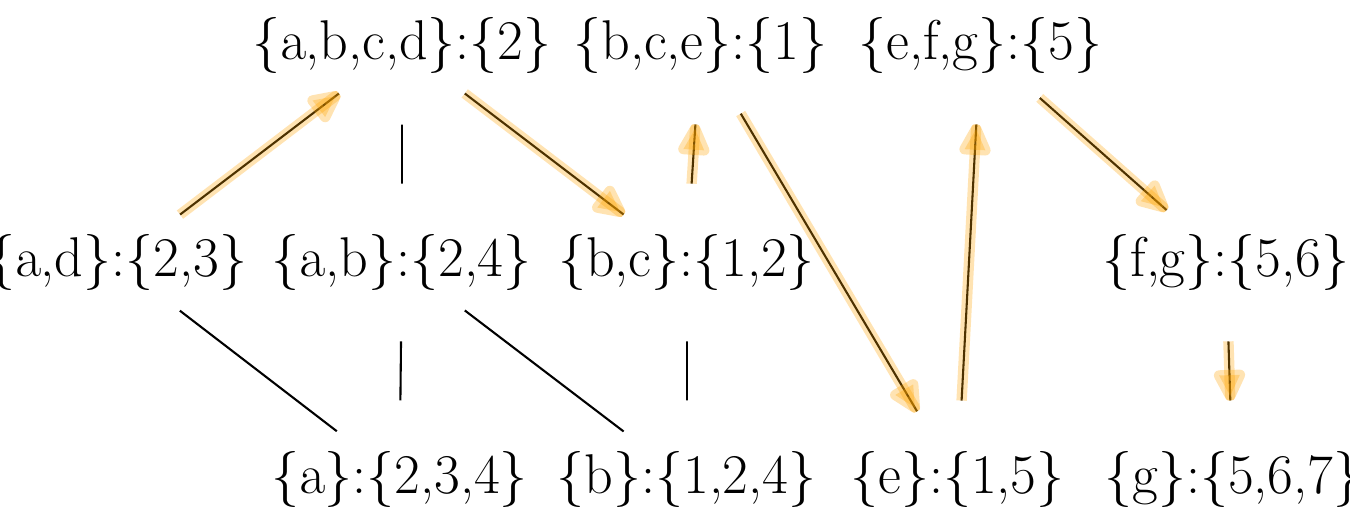}
        \caption{1-Path from hyperedge 3 to 7.}
        \label{fig:s1_walk_toy_example}
    \end{subfigure}
    \begin{subfigure}{0.55\textwidth}
        \centering
        \includegraphics[width=0.99\textwidth]{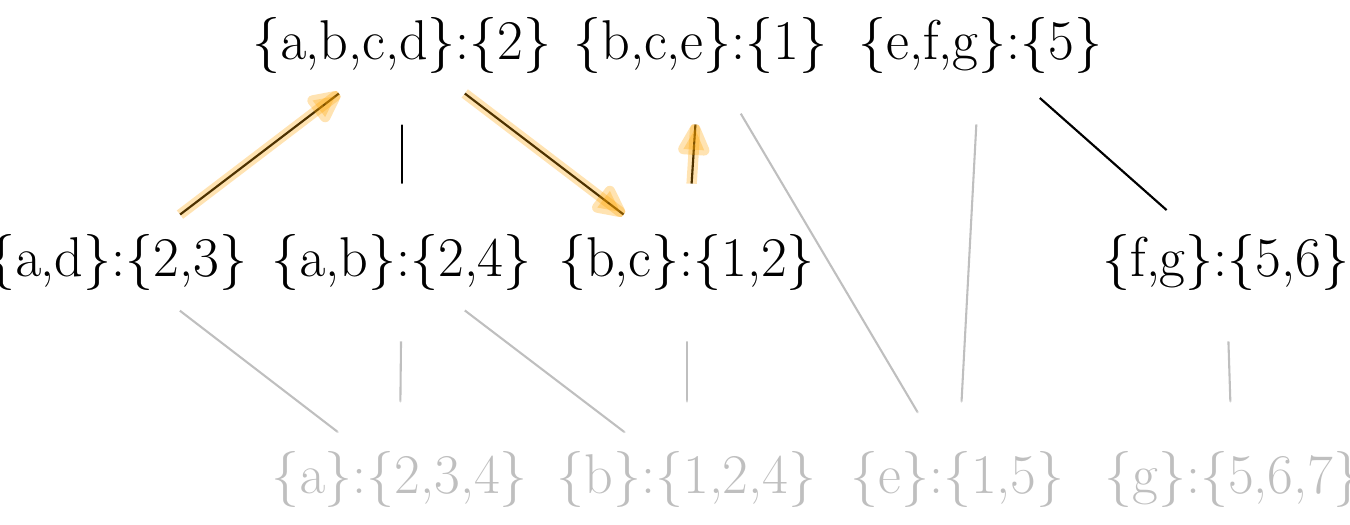}
        \caption{2-Path from hyperedge 3 to 1.}
        \label{fig:s2_walk_toy_example}
    \end{subfigure}
    \caption{Example shortest paths in the intersection closure lattice between hyperedges in the example in Figure~\ref{fig:toy_example_HG} for $s=1$ and $s=2$. For $s=2$, the unusable nodes and edges are translucent.}
    \label{fig:walks_toy_example}
\end{figure}

\subsection{Shortest $s$-Path}

We now show how our lattice representation of the hypergraph can be used to calculate the shortest $s$-path without the need for recalculating the lattice for each $s$.
A shortest $s$-path in an unweighted hypergraph between a source $e_s$ and target $e_t$ hyperedge is defined as a sequence of hyperedges $P_H = [e_s, e_1, e_2, \ldots, e_t]$ that minimizes the length of $|P_H|$, where each adjacent edge in $P_H$ must have an overlap of at least size $s \geq |e_i \cap e_{i+1}|$. 
This is typically calculated by first creating the $s$-line graph, which represents all of the hyperedges as as nodes and adds edges between these nodes if the hyperedges intersect with at least $s$ vertices in their hyperedges \cite{Aksoy2020}. 
It is typical to then apply a standard shortest path algorithm (e.g., depth-first search) on the $s$-line graph to get a sequence of nodes that are representative of hyperedges in the original hypergraph.
A drawback of using the $s$-line graph is that it requires that it is either recalculated for each $s$ or that the graph is weighted with the overlap information.
In the lattice we can also apply a shortest path algorithm; however we do not need to reconstruct the lattice for each desired $s$ as was done for the $s$-line graph. 

We now show two example shortest $s$-paths for $s=1$ and $s=2$ in Figure~\ref{fig:walks_toy_example} following Algorithm \ref{alg:shortest_s_path}.
In Figure~\ref{fig:s1_walk_toy_example}, we have highlighted the path in the intersection closure lattice for $s=1$ between hyperedges 3 and 7. We found the associated nodes in the lattice ($\{a,d\}:\{2,3\}$ for hyperedge 3 and $\{g\}:\{5,6,7\}$ for hyperedge 7) using the symmetric difference between hyperedges sets of the upstream node labels as previously described. 
The top is removed because that hyperedge did not exist in the hypergraph and the bottom since that node would require $s=0$. 
We found the shortest path in the lattice $P_L$ highlighted in orange in Figure~\ref{fig:s1_walk_toy_example} using a depth-first search. 
The path from 3 to 7 goes through both hyperedges and intersections with a length (and distance) in the lattice of $d_L = 7$. 
However, we are also interested in its equivalent interpretation in the hypergraph. 
To do this, we look at the sequence of nodes in $P_L$ and reduce the distance $d_L$ for each node that is associated to an intersection and for each sequence of three labels that are nested subsets (e.g., $\{e,f,g\}: \{5\}$ to $\{f,g\}: \{5,6\}$ to $\{g\}: \{5,6,7\}$) as outlined in Algorithm \ref{alg:shortest_s_path}. 
The resulting shortest $s$-path distance in the hypergraph is $d_H = 4$ with the sequence of hyperedges as $P_H = [3,2,1,5,7]$.

In a similar procedure as in Figure~\ref{fig:s1_walk_toy_example} we also found an $s=2$ shortest path between hyperedges $3$ and $1$. 
To demonstrate this, Figure~\ref{fig:s2_walk_toy_example} shows the nodes and connecting edges in the lattice that were removed based on their node sets being less than $s$ (e.g., intersection node $\{e\}:\{1,5\}$) as being semi-transparent. 
This shows, for example, that there is no 2-path from $3$ to $7$. However, there is still a path from $3$ to $1$ in the lattice with associated hypergraph 2-path $P_H = [3, 2, 1]$ and distance $d_H = 2$.

\subsection{$s$-Connected Components}
To find the $s$-connected components in a hypergraph using the concept lattice we build heavily from the procedure for finding the $s$-path. 
Specifically, we begin by removing nodes and connecting edges where the size of the node set for the lattice label is not at least of length $s$. 
Then, following Algorithm \ref{alg:s_connected_components}, we can take the components in the reduced poset $L'$ using a graph based method to get components in the lattice, called $C_L$. 
For each component in $C_L$ we simply get the hyperedges associated to each of the nodes in a component $C_L$ and reconstruct the hypergraph component $C_H$. 
For $s=1$ in our example, we would get one component that is our original hypergraph since the lattice is one component as shown in Figure~\ref{fig:s1_walk_toy_example}. 
For $s=2$ we get two components since their are two components in the lattice shown in Figure~\ref{fig:s2_walk_toy_example}.

\section{Case Study} \label{sec:case}
We next apply this new theory to a large real-world dataset. We experiment with a data set we call ``UKR14'' \cite{tracey-EtAl:2022:LREC1} which is a many gigabyte dataset of events extracted from news reports from the \textbf{2014} Russian invasion of Ukraine. These events can be viewed as a hypergraph where hyperedges are events (e.g.\ transport or attack events) with nodes in the hyperedges as the items involved in the event  (e.g., people and places). See Figure~\ref{fig:ukr14_HG} for a visualization of a subset of the dataset with edge and node labels replaced with numeric labels. We compute the formal concept lattice with Algorithm~\ref{alg:lattice_from_hypergraph} and plot it in Figure~\ref{fig:ukr14_poset}. In Figures 
\ref{fig:UKR14_dist_max_bottom},
\ref{fig:UKR14_dist_max_top},
\ref{fig:UKR14_dist_min_bottom}, and 
\ref{fig:UKR14_dist_min_top}, we plot the distributions of lattice nodes distances to root and leaf respectively. This gives us a distribution that indicates the shape of the lattice or how overlapping the events are. 
We see that there is significant depth up to 8 levels in the lattice, from Figure \ref{fig:UKR14_dist_min_top}. Note the logarithmic scale. 
This real world dataset follows a powerlaw distribution which can be modelled by the Chung Lu random graph model or the clustering variant \cite{Chung2000, Chung2023}. 
The maximal distances plotted in Figures \ref{fig:UKR14_dist_max_top} and \ref{fig:UKR14_dist_max_bottom} are much larger than the minimal ones. 
This tell us that the lattice and hence the concepts rarely form chains. 
As a hypergraph, chains are topologically contractible so nontrivial topology and geometry must be present which is an important invariant of the dataset. 

\begin{figure}[htb]
    \centerline{
    \begin{subfigure}[b]{0.5\textwidth}
        \centering
        \includegraphics[width=\textwidth]{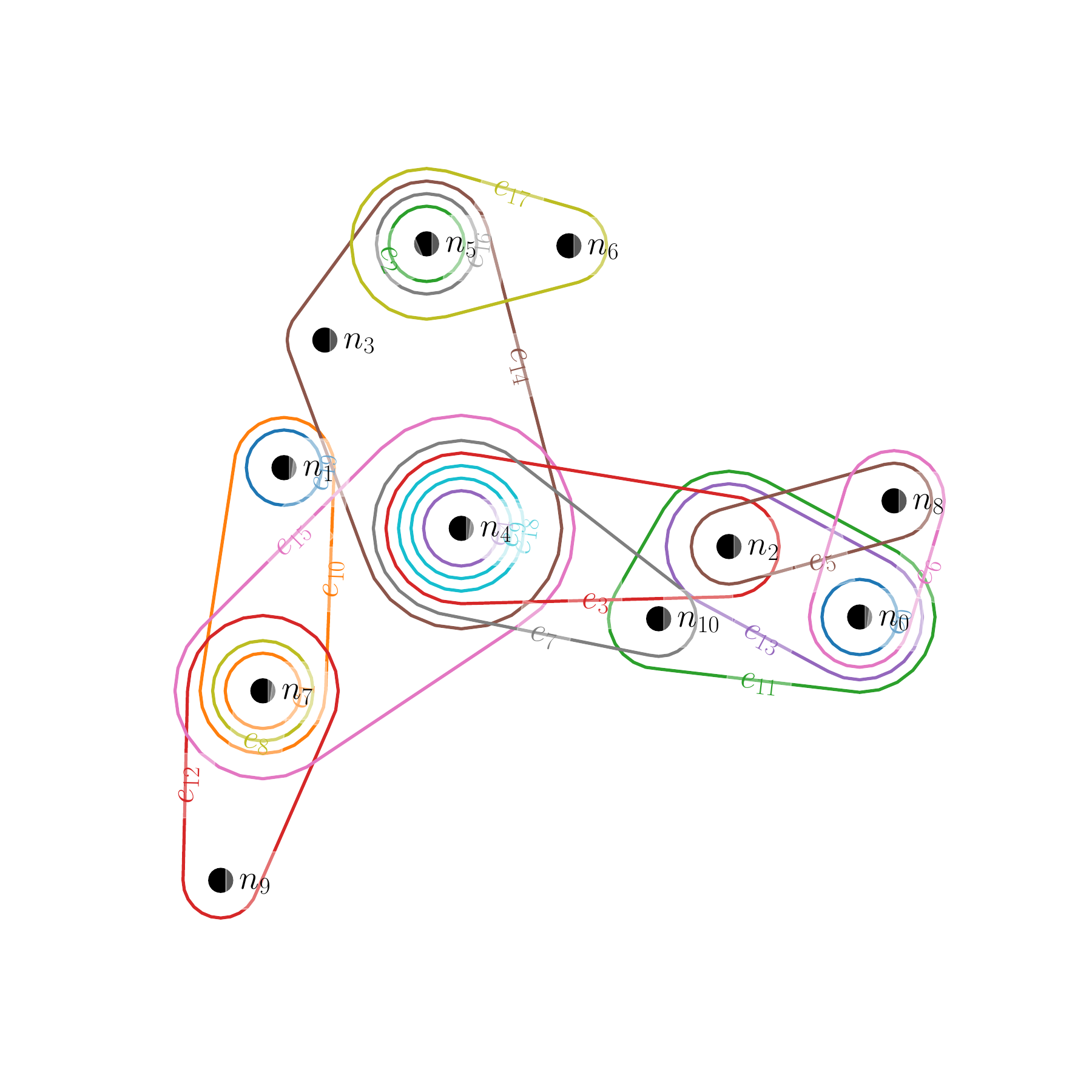}
        \caption{Hypergraph.}
        \label{fig:ukr14_HG}    
    \end{subfigure}
    \begin{subfigure}[b]{0.7\textwidth}
        \centering
        \includegraphics[width=\textwidth]{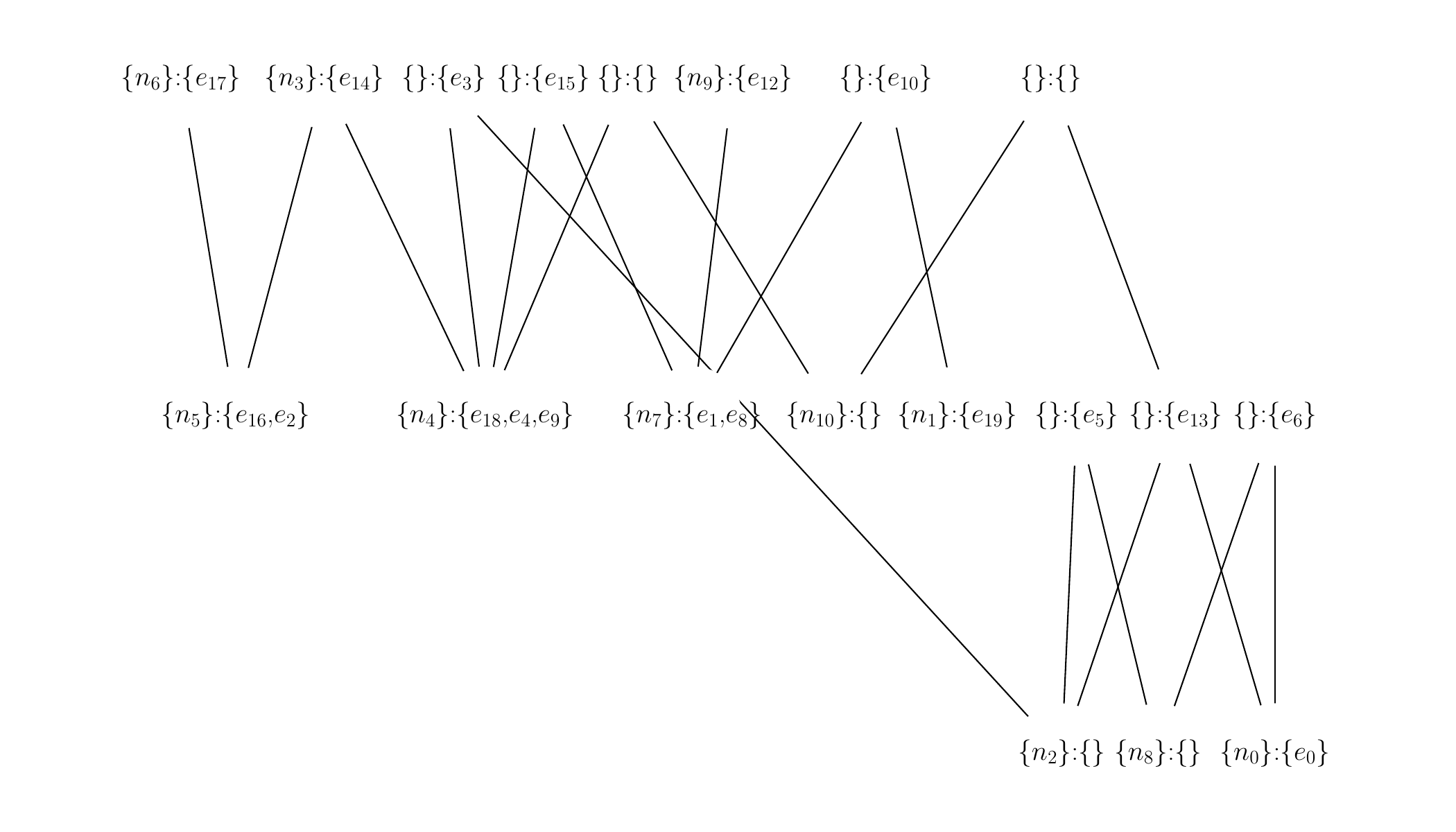}
        \caption{Lattice (lattice top and bottom omitted) of hypergraph.}
        \label{fig:ukr14_poset}
    \end{subfigure}
    }
    
    \caption{A small connected component of dataset UKR14.}
    \label{fig:ukr14}
\end{figure}

\begin{figure}[htb]
    \centerline{
    \begin{subfigure}[b]{0.5\textwidth}
        \centering
        \includegraphics[width=\textwidth]{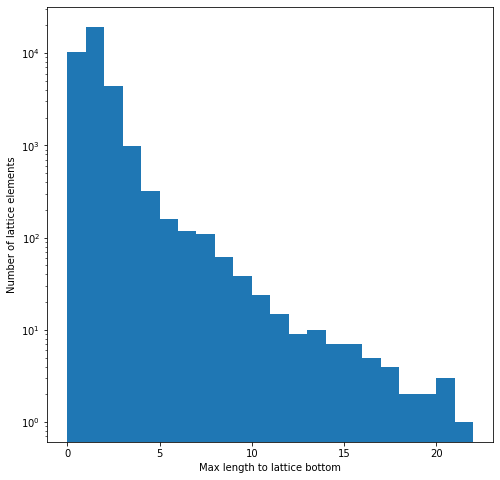}
        \caption{Maximal distances to bottom of lattice.}
        \label{fig:UKR14_dist_max_bottom}
    \end{subfigure}
    \begin{subfigure}[b]{0.5\textwidth}
        \centering
        \includegraphics[width=\textwidth]{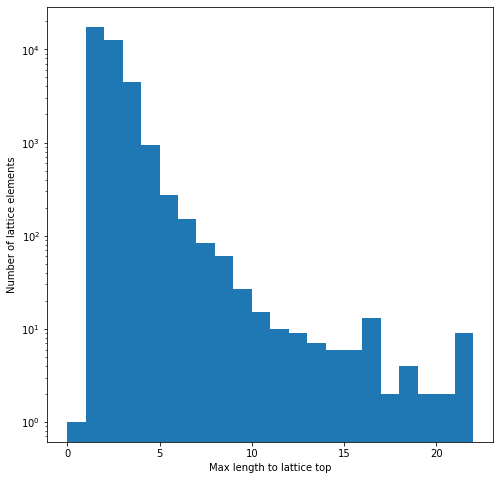}
        \caption{Maximal distances to top of lattice.}
        \label{fig:UKR14_dist_max_top}
    \end{subfigure}}
    
    \caption{Histogram of maximal distances to top/bottom of lattice of the largest connected component of dataset UKR14.}
    \label{fig:UKR14_dist1}

\end{figure}
    
\begin{figure}[htb]
    \centerline{
    \begin{subfigure}[b]{0.5\textwidth}
        \centering
        \includegraphics[width=\textwidth]{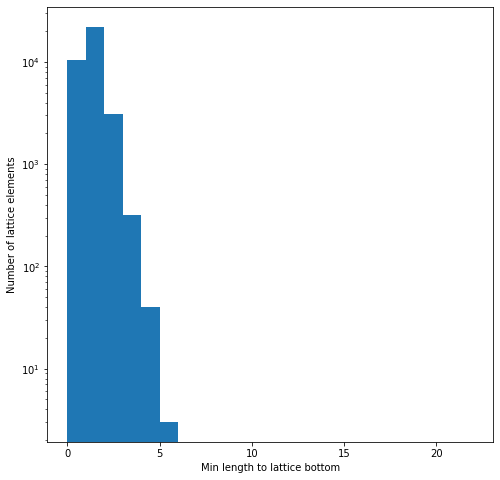}
        \caption{Minimal distances to bottom of lattice.}
        \label{fig:UKR14_dist_min_bottom}
    \end{subfigure}
    \begin{subfigure}[b]{0.5\textwidth}
        \centering
        \includegraphics[width=\textwidth]{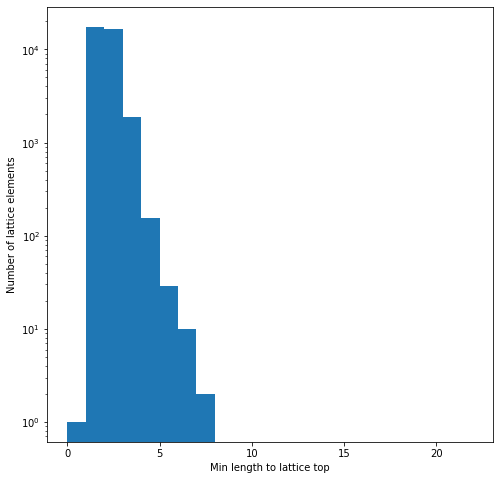}
        \caption{Minimal distances to top of lattice.}
        \label{fig:UKR14_dist_min_top}
    \end{subfigure}}
    
    \caption{Histogram of minimal distances to top/bottom of lattice of a connected component of dataset UKR14.}
    \label{fig:UKR14_dist2}
\end{figure}

\section{Conclusion} \label{sec:conclusion}

We gave a direct proof that the concept lattice and topped intersection closure hypergraphs are lattice isomorphic.
Hypergraph $s$-paths and lattice $s$-paths are mutually determined.
We provided a new vectorized algorithm to calculate the concept lattice. We gave an algorithm for hypergraph $s$-paths via lattice and applied it to get $s$-connected components from the lattice.
Our algorithms can also be extended to $s$-clustering by pruning hyperedges in the hypergraph.
Our future work is to compute lattices, paths, and clusters for larger hypergraphs and further analyze computational complexity.

\section*{Acknowledgements}

Pacific Northwest National Laboratory\footnote{Released under PNNL-SA-187287.} is a multiprogram
national laboratory operated for the US Department
of Energy (DOE) by Battelle Memorial Institute
under Contract No. DE-AC05-76RL01830.
Robinson was partially supported by the Defense Advanced Research Projects Agency (DARPA) SafeDocs program under contract HR001119C0072.  Any opinions, findings and conclusions or recommendations expressed in this material are those of the authors and do not necessarily reflect the views of DARPA.

\bibliographystyle{siamplain}
\bibliography{ref}
\end{document}